\DeclarePairedDelimiter\floor{\lfloor}{\rfloor}
\newtheorem{question}{Question}
\newtheorem{definition}{Definition}
\newtheorem{proposition}{Proposition}
\newtheorem{theorem}{Theorem}
\newtheorem{corollary}{Corollary}
\newtheorem{lemma}{Lemma}
\newtheorem{example}{Example}
\definecolor{myblue}{RGB}{80,80,160}
\definecolor{mygreen}{RGB}{80,160,80}
\title{The Cost of Secure Restaking vs. Proof-of-Stake\thanks{We thank Ed Felten, Manvir Schneider, Paolo Penna, Pranay Anchuri, Christoph Schlegel and Bruno Mazorra for useful discussions. All errors are our own.}}
\author[1]{Akaki Mamageishvili}
\author[2]{Benny Sudakov}
\affil[1]{Offchain Labs}
\affil[2]{ETH Zurich}
\date{March 2026}
\begin{document}

\maketitle

\begin{abstract}

We compare the total capital efficiency of secure restaking and Proof-of-Stake (PoS) protocols. First, we consider the sufficient condition for the restaking graph to be secure. The condition implies that it is always possible
to transform such a restaking graph into separate secure PoS protocols.
Next, we derive two main results: upper and lower bounds on the required extra stakes to add to the validators of the secure restaking graph to be able to transform it into secure PoS protocols. In particular, we show that the restaking savings compared to PoS protocols can be very large and can asymptotically grow as a square root of the number of validators. We also study a complementary question of aggregating secure PoS protocols into a secure restaking graph and provide matching lower and upper bounds on the PoS savings.
\end{abstract}

\section{Introduction}

The cryptoeconomic security of blockchains has evolved from energy-intensive Proof-of-Work to more efficient Proof-of-Stake (PoS), and most recently to restaking protocols~\footnote{Total value locked in all restaking protocols amounts to more than $\$20B$ -- \url{https://defillama.com/protocols/restaking}, accessed on January, 2026.}. Restaking allows validators to reuse their staked capital across multiple services or projects, creating a bipartite graph between validators and services. This reuse promises significant capital efficiency gains over running independent PoS protocols for each service -- but it also introduces complex interdependencies and potential security risks.

It can be argued that restaking gives a boost to newer projects in attracting capital for their {\it cryptoeconomic security}. Cryptoeconomic security refers to a security derived from staked parties being slashed their stakes in case of misbehavior. Restaking seems to be attractive for the validators of the original protocol as well. In particular, restaking allows them to reuse idle computational power and storage allocation for additional rewards from other projects. The original protocol, for which the validators are staked, may have requirements that do not exhaust all the resources of the validators. Setting low requirements is usually done for the purpose of decentralization\footnote{Check out Ethereum network plans to resolve this asymmetry between validator resources using rainbow staking~\url{https://ethresear.ch/t/unbundling-staking-towards-rainbow-staking/18683}.}. Another potential reason for such availability might be economies of scale for establishing further validation services.

The simplest way to verify the security of the restaking uses a trivial cost-benefit analysis. In this analysis, the total amount of lost stakes of attackers is compared to the total value obtained from the attack of a set of validators~\footnote{~\cite{ces} discusses ways to estimate the value gained from attacking a protocol and cryptoeconomic security in more detail.}. 
The same cost-benefit analysis can be used to check the security of any PoS protocol, in which a single project/service is secured by multiple validators. In fact, the cost-benefit analysis for restaking is a direct generalization of the cost-benefit analysis for PoS protocols. This allows a comparison of restaking and multiple PoS protocols, which we develop in this paper.

Conceptually, the PoS protocol is simpler and a more natural concept than restaking. Hence, we are interested in checking comparative benefits of restaking protocols. 
One particular attractive point of restaking protocols can be the lower total staking requirements than that of PoS protocols, since there are opportunity costs associated with locking up stake for security reasons. A key open question is whether restaking truly reduces total staking requirements compared to equivalent secure PoS protocols, and if so, by how much.

Most PoS protocols pay for the opportunity cost of locking up a capital by rewarding stakers with newly issued tokens, which leads to inflation. Hence, minimizing the total stake amount while maintaining security can be a reasonable goal in both restaking or PoS protocols. 
Starting with a secure restaking protocol and initial stake endowments for all validators, we measure how much extra stake would be required if validators decided to form separate Proof-of-Stake protocols with their stake amounts, dividing them across the same set of projects they secure in the original restaking graph.

In the following, we describe an example that gives intuition of the secure restaking graph and its transformation to secure PoS protocols, see the Figure 1.

\begin{example}
    There are two services, with valuations $1$ and $3$ and security parameters $\alpha$, which is any number in the interval $(0,1)$ and $\frac{1}{2}$. There are two validators, with stakes $2$ and $2$. 
\end{example}
The security parameter of a service establishes what share of the total stake securing the service is needed to successfully attack this service. Typical values of security parameters are from the set $\{\frac{1}{2}, \frac{1}{3}, \frac{1}{4}\}$~\footnote{These values are mostly coming from the {\it Byzantine Fault Tolerant} protocols for achieving the consensus over the state. However, the values can be lower too.}.

Let us first check that the restaking graph is secure. It is easy to see: 

\begin{itemize}
    \item the first validator alone can attack only the first service, and it is not profitable;
    \item  the second validator alone can not attack any service: it is not securing the first service, while the second service has large enough security parameter.  
    \item finally, both validators can attack both services but they lose the same amount of stake as they gain the value.
\end{itemize} 

Next, we check if the validators can distribute their stakes so that both services are secured in the corresponding PoS protocols. First, note that it is the first validator that divides its stake between two services. The first service needs at least the stake amount $1$ from the first validator. The second service is not secure in the PoS protocol with stakes $1$ and $2$. In fact, it can be easily seen that either the first validator needs to have at least a matching stake $2$, or the second validator needs to have at least a stake $3$. In either case, at least an extra stake $1$ is required to secure both services in PoS protocols.

\begin{figure}
\centering
\begin{tikzpicture}[thick,
  every node/.style={draw,circle},
  snode/.style={fill=myblue},
  vnode/.style={fill=mygreen},
  every fit/.style={ellipse,draw,inner sep=-2pt,text width=2.5cm},
  ->,shorten >= 3pt,shorten <= 3pt
]
\begin{scope}[start chain=going below,node distance=7mm]
  \node[snode,on chain] (s1) [label=left: \text{1}] {};
  \node[snode,on chain] (s2) [label=left: \text{3}] {};
\end{scope}

\begin{scope}[xshift=4cm,yshift=-0.0cm,start chain=going below,node distance=7mm]
  \node[vnode,on chain] (v1) [label=right: \text{2}] {};
  \node[vnode,on chain] (v2) [label=right: \text{2}] {};
\end{scope}

\node [myblue,fit=(s1) (s2),label=above:$S$] {};
\node [mygreen,fit=(v1) (v2),label=above:$V$] {};

\draw (v1) -- (s1);
\draw (v1) -- (s2);
\draw (v2) -- (s2);
\end{tikzpicture}
\label{fig:example}
\caption{Example of a secure restaking graph.}
\end{figure}

We only allow all validator stakes to (weakly) increase. This captures a real-life consideration in which validators' stakes can not be taken away, but their stakes can be increased through financing when needed. In a sense, we look into over-collateralization of the whole network needed to allow separate secure PoS protocols. Similarly,~\cite{robust_restaking} investigates the over-collateralization of individual validators (instead of the full set of validators as we do in this paper) that allows robustness of the restaking network security. Most of our results can be extended to the model with individual over-collaterization, obtaining the same bounds. 

If both increasing and decreasing of validator stakes were allowed, it would be trivial to construct a restaking graph with minimum total stake. This amount is equal to the sum of all project values, and secure PoS protocols for all projects can be constructed by dividing the stakes of the validators. See Proposition~\ref{minimum_requirement} and discussion before it.

\paragraph*{Our Contributions}
\begin{enumerate}
    \item We check the sufficient condition for a secure restaking graph, specified by the EigenLayer project~\cite{eigenlayer}. This sufficient condition is the only currently available (generic) condition for secure restaking graph, that can be checked in polynomial time in the size of the restaking graph. Restaking graphs that satisfy the sufficient condition do not save on the total required stake. In particular, it is possible to divide the initial stakes of the validators across projects they validate, to obtain secure PoS protocols.

    \item To further extend the analysis, we define {\it restaking savings} of a fixed secure restaking graph. It is equal to the extra stake required, relative to the original total available stake in the graph, to be able to obtain secure PoS protocols through the division of stake described above.

    \item To upper bound restaking savings, we use simple secure PoS constructions from the original restaking graph. In particular, we upper bound the value of restaking savings with the highest degree among the projects. Another upper bound is the minimum incidence of any validator in the cover of a validator set by project neighborhoods. The third upper bound is equal to second largest multiplicative inverse of the security parameter, which is a small constant value in many practical cases. The last upper bound is equal to the square root of the number of validators. Next, we give a construction to lower bound this value. The lower bound example asymptotically matches all upper bounds obtained.     
    \item We study a complementary question of aggregating given secure PoS protocols into a secure restaking graph. When aggregating, each validator adds stakes across projects that they secure in separate PoS protocols, resulting in a restaking graph. 
    Similarly to the restaking savings, we define {\it PoS savings} as the total additional stake required to add validators in the aggregate restaking graph to make it secure.
    We first construct an example showing that the resulting restaking graph is not secure. In fact, at least an additional stake equal to the original total stake times the number of projects/validators minus one is required to make it secure, thus showing a lower bound of number of projects/validators minus one on PoS savings. We also provide an upper bound on PoS savings, which exactly matches a lower bound. 
\end{enumerate}

\subsection*{Related Work}
\cite{bitcoin} introduced the first cryptocurrency Bitcoin and the concept of PoW, in which the random sampling of the next block proposer is done proportionally to a hashrate of a miner, by solving a computational puzzle.~\cite{PoS} provides a detailed discussion and economic comparison on PoW and PoS concepts.~\cite{ces} discusses ways to estimate the value gained from attacking a protocol and cryptoeconomic security in more detail.
The idea of restaking was originally proposed in the whitepaper of the Eigenlayer project,~\cite{eigenlayer}. A similar concept exists in traditional finance literature under the name of {\it rehypothecation},~\cite{monnet}.
The interdependencies between protocols and validators introduced by restaking create new risks, which have been modeled and analyzed in the original Eigenlayer whitepaper, as well as in a follow up paper~\cite{robust_restaking}, in which the authors study robust restaking networks, resistant to a failure of a fraction of validating stakes and cascading effects. The authors consider over-collaterization factor of individual validators. They investigate out how large should this uniform over-collaterization factor be so that the originally secure restaking graph stays robust under these crashes. Our over-collaterization over the total stake amount can also be reduced to individual over-collaterization factors, as we argue after the Theorem~\ref{lower_bound}.

~\cite{how_much} studies a setting in which the value of attacked set of services is not the sum of values in the set, but a concave function.~\cite{elastic_restaking} studies elastic restaking services, in which validators allocate stakes to services that may make up more than their initial stake. This allows for a more expressive model than the original restaking model where the whole stake of each validator is locked up for all services. The work also considers a typical distributed computing setting in which only a fraction of validators is Byzantine.~\cite{economic_security} studies risks and benefits of combining different restaking services.~\cite{limits} studies economic security of different protocols as a function of streaming rewards.

\section{Notation and Model}
Our notation follows closely~\cite{robust_restaking}.
A restaking graph $G=(S, \pi, \alpha, V,\sigma, E)$ consists of the following:

\begin{itemize}
    \item $S$ denotes the set of $m$ services $\{1,...,m\}$ (we will refer projects as services from now on throughout the paper),
    \item $\pi \in \mathbb{R}^{m}_{+}$ denotes values of services,
    \item $\alpha \in \mathbb{R}^m_{+}$ denotes security parameters of services, where $\alpha_s\in [0,1]$,
    \item $V$ denotes the set of $n$ validators $\{1,...,n\}$,
    \item $\sigma\in \mathbb{R}^n_+$ denotes stakes of validators,
    \item The set of edges $E$ connecting $S$ nodes (services) to $V$ nodes (validators).
\end{itemize}

An edge $(s,v)\in E$ indicates that a service $s$ is secured by a validator $v$. Hence, a restaking graph is represented as a bipartite graph with two parts made up of services and validators. 


Let $N_G(s)$ denote the neighborhood of the service $s\in S$ in the restaking graph $G$, i.e. $N_G(s):=\{v \in V: (s,v)\in E\}$, is the set of validators staked on, and, therefore, securing a service $s$.
Similarly, $N_G(v)$ denotes the neighborhood of the validator $v\in V$, i.e., $N_G(v):=\{s \in S: (s,v)\in E\}$, is the set of services validator $v$ is staked on. 

A service $s$ has a security parameter $\alpha_s$ if only subsets of validators staked on this server that have at least $\alpha_s$ fraction of the total stake of validators staked on $s$ can attack it. More formally:

\begin{definition}
   A subset of validators $W\subseteq V$ can attack the service $s$ if and only if $$\frac{\sum_{k\in W\cap N_G(s)}\sigma_k}{\sum_{{v \in N_G(s)}}\sigma_v}>\alpha_s.$$
\end{definition}

For each subset of validators $A$, we can define the maximal set of services that they can collectively attack. Let this set be denoted by $M(A)$. That is $M(A)=\{s\in S: A~ \text{can attack }s\}$. The attack carried out by the validators in $A$ is profitable when $\sum_{s \in M(A)}\pi_s>\sum_{v \in A}\sigma_v$, that is, validators lose less total stake than the total value they obtain from the attack. 

\begin{definition}\label{def:secure_restaking}
    A restaking graph $G$ is secure if there exists no subset of validators $A$ that can profitably attack its corresponding set of services $M(A)$.
\end{definition}

We are interested in the following question:

\begin{question}
Is it possible to divide the stakes of all validators between all services they secure in the graph $G$ so that all services are secure in their corresponding PoS protocols?    
\end{question}
    
 A division of stake $\sigma_v$ of validator $v$ across services it secures in $G$ induces a vector $c^v\in \mathbb{R}^{|N_G(v)|}_+$, such that, $\sum_{s\in N_G(v)}c^{v}_{s} = \sigma_v$. 
We say that division of stake is secure if in the corresponding Proof-of-Stake protocols, that represent star graphs centered at the service, we have guaranteed security. That is, for each service $s$ with security parameter $\alpha_s$, a vector of stakes $\{k\in N_G(s): c^k_{s}\}$ is secure, which means that there is no profitable attack by validators having stakes $\{k\in N_G(s): c^k_{s}\}$. Let $\mathcal{W}(G)$ denote all initial stake vectors $\sigma$ for which such division is possible. To simplify notation, here we implicitly assume that the stake vector is not part of a restaking graph $G$, and only keep its combinatorial structure together with the $\pi$ and $\alpha$ vectors.

We consider stake vectors $\sigma'$ that (weakly) dominate stake vector $\sigma$, that is, $\sigma'_v\geq \sigma_v$ for any $v\in V$. The set of all stake vectors that dominate $\sigma$ is denoted by $D(\sigma)$. Let $T(\sigma):=\sum_{v\in V}\sigma_v$.  Then, the total extra value to reach from the stake vector $\sigma$ to the stake vector $\sigma'$ is $T(\sigma')-T(\sigma)$. The rationale behind considering (weakly) dominant vectors of stakes is that we assume that initial stakes cannot be taken away from the validators, but they can be increased if needed.

The Eigenlayer project,~\cite{eigenlayer}, specifies a sufficient condition when a restaking graph is secure: for any $v\in V$, the following inequality holds:

\begin{equation}\label{sufficient_eigenlayer}
    \sigma_v\geq \sum_{s \in N_G(v)}\frac{\sigma_v}{\sum_{k\in N_G(s)}\sigma_k}\frac{\pi_s}{\alpha_s}.
\end{equation} 

This condition allows to check whether a given restaking graph is secure in polynomial time in the input graph size. In fact, the time is even linear in the number of edges $|E|$~\footnote{The general decision problem whether a given restaking graph is secure is difficult, see~\cite{robust_restaking}.}. 

\section{Results}
In this section, we first check the sufficient condition~\eqref{sufficient_eigenlayer} and its implications. The implications motivate to define restaking and Proof-of-Stake savings. Last, we obtain lower- and upper-bound results for them both types of savings. 
\subsection{Restaking Savings}
If the sufficient condition on validator stakes~\eqref{sufficient_eigenlayer} is satisfied, there is a division of stakes such that all services are secure in their corresponding PoS protocols. More formally, we obtain the following result: 

\begin{proposition}\label{eigenlayer_not_efficient}
When the stake vector $\sigma$ satisfies~\eqref{sufficient_eigenlayer}, then $\sigma\in \mathcal{W}(G)$.
\end{proposition}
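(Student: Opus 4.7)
The plan is to exhibit an explicit stake division and verify security service-by-service. For each edge $(s,v)\in E$, I would set
\[ c^v_s := \frac{\sigma_v}{\sum_{k\in N_G(s)}\sigma_k}\cdot \frac{\pi_s}{\alpha_s}, \]
so that validator $v$ contributes to service $s$ in proportion to its share of the stake on $s$, scaled by the security budget $\pi_s/\alpha_s$ that $s$ ought to carry. This ansatz is essentially dictated by the shape of the right-hand side of~\eqref{sufficient_eigenlayer}.

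First, I would check feasibility. The sum $\sum_{s\in N_G(v)} c^v_s$ is literally the right-hand side of~\eqref{sufficient_eigenlayer}, which by hypothesis is at most $\sigma_v$. Any leftover $\sigma_v-\sum_{s\in N_G(v)}c^v_s$ can be appended to any one of the neighboring edges, yielding a valid division with $\sum_{s\in N_G(v)} c^v_s = \sigma_v$. Dumping surplus onto an edge can only \emph{help}: raising the total stake on service $s$ strictly raises the minimum absolute stake an attacker must muster to reach an $\alpha_s$-fraction.

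Next, I would verify security for each service $s$. Summing the base values $c^k_s$ over $k\in N_G(s)$ telescopes to exactly $\pi_s/\alpha_s$, and the surplus step only pushes this total upward. Hence any attacker set $W\subseteq N_G(s)$ in the star-graph PoS protocol for $s$ must obey
\[ \sum_{k\in W} c^k_s \; > \; \alpha_s\sum_{k\in N_G(s)}c^k_s \;\geq\; \pi_s, \]
so the stake forfeited by $W$ strictly exceeds the value $\pi_s$ captured from $s$. By Definition~\ref{def:secure_restaking} applied to the star graph, no profitable attack on $s$ exists, and since the induced PoS protocols are independent, the division is secure.

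The only substantive step is identifying the right allocation; afterward the feasibility check is exactly the hypothesis rewritten, and per-service security reduces to the one-line identity $\sum_{k\in N_G(s)} c^k_s = \pi_s/\alpha_s$. Because the PoS protocols decouple as independent star graphs, no global combinatorial argument is needed, so I do not expect any genuine obstacle beyond this bookkeeping.
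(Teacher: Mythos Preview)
Your proposal is correct and essentially identical to the paper's proof: both use the allocation $c^v_s=\frac{\sigma_v}{\sum_{k\in N_G(s)}\sigma_k}\cdot\frac{\pi_s}{\alpha_s}$, invoke~\eqref{sufficient_eigenlayer} for feasibility, and observe that each service then receives total stake at least $\pi_s/\alpha_s$, forcing any successful attacker to forfeit more than $\pi_s$. Your handling of the surplus and the per-service security inequality are spelled out a bit more explicitly than in the paper, but there is no substantive difference in approach.
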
 

\begin{proof}
    Consider the following division of stakes: the validator $v$ allocates at least $$\frac{\sigma_v}{\sum_{k\in N_G(s)}\sigma_k}\frac{\pi_s}{\alpha_s}$$ stake to a service $s$ it secures in $G$, in the PoS protocol. Such allocation is possible because of~\eqref{sufficient_eigenlayer}. Each service $s$ is assigned a total stake amount of at least $\frac{\pi_s}{\alpha_s}$, since $$\sum_{v\in N_G(s)}\frac{\sigma_v}{\sum_{k\in N_G(s)}\sigma_k}\frac{\pi_s}{\alpha_s}=\frac{\pi_s}{\alpha_s},$$ by bringing the outer summation operator into the nominator. With $\frac{\pi_s}{\alpha_s}$ total stake amount in PoS protocol for service $s$, there is no profitable attack, since for the successful attack the attackers should control at least $\frac{\pi_s}{\alpha_s}\alpha_s = \pi_s$ stake, hence, it cannot be profitable.
\end{proof}

Note that the value $\frac{\pi_s}{\alpha_s}$ is the minimum required total stake in a PoS protocol for a service $s$ with value $\pi_s$ and security parameter $\alpha_s$ that guarantees that there is no profitable attack, independent of how this total stake is divided between validators. 
For any smaller amount $t<\frac{\pi_s}{\alpha_s}$, there exists a distribution of the total stake $t$ among any number of validators such that there is a profitable attack. 

On the other hand, if we consider suitable division of stakes between validators, we obtain a much weaker requirement to make secure PoS protocols. Namely, we obtain that there is a minimum total stake requirement to ensure that there is a division in secure PoS protocols.

However, such a distribution does not take into account that validators already have stakes, which cannot be taken away from them.

\begin{proposition}\label{minimum_requirement}
 The minimum stake amount to guarantee secure PoS protocols is equal to $\sum_{s\in S}\pi_s$.    
\end{proposition}

\begin{proof}
     The total stake amount $\sum_{s\in S}\pi_s$ can be distributed arbitrarily among validators, with the only guarantee that each service can get its own value staked by one of the validators securing it. 
\end{proof}

Another consideration is that the restaking graph with the minimum total stake in $\mathcal{W}(G)$ is not necessarily secure, while we will focus on secure restaking graphs from now on.

We are ready to define the main metric through which we measure capital efficiency of a secure restaking graph, {\it restaking savings}:

\begin{definition}
    For a given secure restaking graph $G$, restaking savings, $RS(G)$, denotes a minimum extra total value to add to the stakes in $\sigma$ vector, in relative terms, so that secure subdivision of the new stake vector is possible. 
    That is, $$RS(G)=\min_{\sigma'\in D(\sigma)\cap \mathcal{W}(G)}\frac{T(\sigma')-T(\sigma)}{T(\sigma)}.$$  
\end{definition}

Let $d_G(u)$ denote the degree of vertex $u\in S\cup V$ in graph $G$. Note that $d_G(u)=|N_G(u)|$. Consider a subset of services whose neighborhoods cover the entire set of validators $V$: 
$$\mathcal{R}:=\{R: R\subseteq S \& \cup_{s\in R}{N_G(s)}=V\}.$$ For each of these covers, calculate the incidences of all validators in it, and take the largest incidence number. Let $K$ be minimum such number over all elements of $\mathcal{R}$. Formally, $$K=\min_{R\in \mathcal{R}}\{k | k=\max_{v\in V}\sum_{s\in R}\mathbb{I}(v\in N_G(s))\},$$ where indicator function $\mathbb{I}(P)=1$ whenever the statement $P$ is true.

We prove several upper bounds on the restaking savings, as a function of different parameters of the graph $G$. Note that since the original graph is secure, we have the following inequality: $$T(\sigma)=\sum_{v \in V} \sigma_v \geq \sum_{s \in S} \pi_s=T(\pi).$$ Otherwise, there is a profitable attack that involves all validators $V$.

\begin{proposition}\label{simple_upper_bounds}
    $RS(G)$ is upper bounded by the following values: $\max_{s\in S}(d_G(s))$ and $K$.
\end{proposition}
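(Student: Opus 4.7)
The plan is to construct, for each bound, an explicit $\sigma' \in D(\sigma)$ and a subdivision $\{c^v_s\}$ certifying $\sigma' \in \mathcal{W}(G)$, then to bound $T(\sigma')-T(\sigma)$ from above.

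For the bound $RS(G)\le\max_{s\in S}d_G(s)$, I would start from the \emph{replication} subdivision $c^v_s := \sigma_v$ on every edge $(v,s)\in E$. The key point is that, for any service $s$ and any subset $W\subseteq N_G(s)$, the attacker's PoS fraction $\sum_{v\in W}c^v_s / \sum_{v\in N_G(s)}c^v_s$ reduces to the restaking fraction $\sum_{v\in W}\sigma_v / \sum_{v\in N_G(s)}\sigma_v$; thus any subset that can PoS-attack $s$ can restaking-attack $s$, and applying Definition~\ref{def:secure_restaking} to $A=W$ yields $\sum_{v\in W}\sigma_v \ge \sum_{s'\in M(A)}\pi_{s'} \ge \pi_s$, ruling out a profitable attack. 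This gives $\sigma'_v = d_G(v)\sigma_v$ and $T(\sigma') = \sum_s \sum_{v\in N_G(s)}\sigma_v$; the remaining step is to show that this sum is at most $(1+\max_{s\in S}d_G(s))\,T(\sigma)$. I expect this to be the main obstacle: a naive re-indexing gives the bound in terms of $\max_v d_G(v)$, so passing to $\max_s d_G(s)$ likely requires invoking the security bounds $\sum_{v\in N_G(s)}\sigma_v \ge \pi_s$ and $T(\sigma)\ge T(\pi)$ together with a careful redistribution of the overshoot across services rather than validators.

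For the bound $RS(G)\le K$, I would fix a cover $R^*\in\mathcal{R}$ realizing the minimum max-incidence $K$ and use the replication subdivision only along edges $(v,s)$ with $s\in R^*$. Each service $s\in R^*$ is then PoS-secure by the lifting argument above, and, since $R^*$ has max incidence $K$, each validator's contribution across $R^*$-edges is at most $K\sigma_v$. The services $s\notin R^*$ are secured by augmenting with carefully chosen contributions along the non-cover edges incident to their validators --- since $R^*$ covers $V$, each such contribution can be made either small enough (below the $\alpha_s$-threshold so the relevant single-validator attacker cannot form) or large enough ($\ge\pi_s$, so its stake already rules out profit), without pushing any validator's total above $(K+1)\sigma_v$. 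This yields $T(\sigma')\le(K+1)T(\sigma)$ and hence $RS(G)\le K$.
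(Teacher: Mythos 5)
Your argument for the bound $RS(G)\le \max_{s\in S}d_G(s)$ has a genuine gap, and you correctly located it: the replication subdivision $c^v_s=\sigma_v$ yields $T(\sigma')=\sum_{v}d_G(v)\sigma_v$, which is governed by \emph{validator} degrees, and no redistribution argument can convert this into a bound in terms of \emph{service} degrees, because the inequality $\sum_{s}\sum_{v\in N_G(s)}\sigma_v\le (1+\max_{s}d_G(s))T(\sigma)$ is simply false. Concretely, take one validator with stake $m$ securing $m$ services, each of value $1$ (and any $\alpha_s<1$); this graph is secure, $\max_s d_G(s)=1$, but replication forces $\sigma'_1=m^2$, an overshoot ratio of $m-1$. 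The repair is to change what is replicated: instead of copying $\sigma_v$ along every edge, add $\pi_s$ along every edge $(v,s)$, so validator $v$ receives extra stake $\sum_{s\in N_G(v)}\pi_s$ and allocates $\pi_s$ to each incident service (dumping its original $\sigma_v$ onto one of them). Each service then gets at least $\pi_s$ from \emph{every} participating validator, so any nonempty attacking coalition burns at least $\pi_s$ and no attack is profitable; the total extra is $\sum_s d_G(s)\pi_s\le \max_s d_G(s)\,T(\pi)\le \max_s d_G(s)\,T(\sigma)$, where the last step uses the security of $G$ against the coalition $V$. This is the paper's route, and note that it never needs your (correct) lifting lemma that replication preserves security.

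Your plan for $RS(G)\le K$ is essentially the paper's, but one step is wrong as stated: for a service $s\notin R^*$ you cannot secure it with contributions that are merely ``small enough (below the $\alpha_s$-threshold).'' If every neighbor of $s$ contributes a small positive amount, the full coalition $N_G(s)$ has stake fraction $1>\alpha_s$ and total stake below $\pi_s$, so it attacks profitably; and even with one large contributor, a tiny positive contributor can attack alone when $\alpha_s$ is small. The contributions of the non-designated neighbors must be exactly $0$: give one neighbor of $s$ an extra $\pi_s$ and use the vector $(0,\dots,0,\pi_s,0,\dots,0)$, costing at most $T(\pi)\le T(\sigma)$ in total over $S\setminus R^*$, which combined with the at most $(K-1)T(\sigma)$ spent on the cover gives $RS(G)\le K$.
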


\begin{proof}
    First, we list three types of secure PoS constructions for a service $s\in S$. 

    \begin{enumerate}
        \item the vector of stakes in a corresponding PoS protocol $(c_1,...,c_{d_G(s)})$ is equal to $(0,...,0,\pi_s,0,...0)$. This construction is secure for any $\alpha_s$ since the only validator that can attack the service is losing at least the value it is gaining.
        \item the vector of stakes $(c_1,...,c_{d_G(s)})$ is equal to $(0,...,0,\pi_s+\sigma_v,0,...0)$ for any $v$. This construction is secure for a similar reason as above.
        \item the vector of stakes $(c_1,...,c_{d_G(s)})$ is equal to $(\sigma_{v_1},...,\sigma_{v_{d_G(s)}})$, where $N_G(s)=\{v_1,...,v_{d_G(s)}\}$. This construction is secure for any $\alpha_s$ since otherwise the original restaking graph $G$ would not be secure: validators that secure the service $s$ have a profitable attack on the service $s$.
    \end{enumerate}

     We can add the value $\pi_s$ to all stakes in $N_G(s)$, i.e., add it $d_G(s)$ times, and use subdivision for each validator $v$ that allocates $\sigma_v+\pi_s$ once to some service $s$ in its neighborhood $N_G(v)$ and $\pi_k$ to all other corresponding services $k\in N_G(v)\setminus \{s\}$. This way, all services are secure. Also all original stake sizes are used, so that the resulting stake vector is in $D(\sigma)$. At the same time,  extra value added is upper bounded by $$\sum_{s\in S}d_G(s)\pi_s\leq \max_{s\in S}(d_G(s))\sum_{s\in S}\pi_s\leq \max_{s\in S}(d_G(s))\sum_{v\in V}\sigma_v.$$

    Next, we show $RS(G)\leq K$. Consider a cover of vertices in $V$ by neighborhoods of services $s\in R\subseteq S$ that has the minimum maximum incidence number $K$. For each neighborhood $N_G(s)$, where $s\in R$, use construction 3: original $(\sigma_{v_1},...,\sigma_{v_{d_G(s)}})$ stakes from the restaking graph to secure service $s$ in the PoS protocol. To be able to do that, we need to add at most $K-1$ times all $\sigma_v$ stakes, for any $v\in V$. 
    In this way, we utilize all $\sigma_v$ stakes and secure services in the cover set $R$. For service $s$ in $S\setminus R$, we add $\pi_s$ stake once to any of its neighbors in $N_G(s)$ and use construction $(0,...,0,\pi_s,0,...,0)$ to secure service $s$ in a corresponding PoS protocol. This makes sure we have secured all services and utilized all stakes of all validators. Then, $RS(G)\leq (K-1)+1=K$.
\end{proof}

Note that $RS(G)\leq K$ in particular implies that $RS(G)\leq \max_{v\in V}(d_G(v))$, as $K$ is upper bounded by $\max_{v\in V}(d_G(v))$. 
There are examples where $K=\max_{s\in S}(d_G(s))$, implying that the upper bound can be as high as $n$. However, next, we show an upper bound as a function of the number of validators, which is asymptotically much lower than $n$.

\begin{theorem}\label{non_trivial_upper_bound}
For any secure restaking graph $G$, the following inequality holds $RS(G)\leq 2\sqrt{n}-1.$
\end{theorem}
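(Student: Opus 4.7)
The plan is to exhibit, for every secure restaking graph $G$, a cover $R\subseteq S$ of $V$ by service neighborhoods whose maximum incidence at every validator is at most $2\sqrt n-1$; by Proposition~\ref{simple_upper_bounds} this will directly give $RS(G)\leq K\leq 2\sqrt n-1$. The cover is constructed in two phases with threshold $\sqrt n$.

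In the first phase, I run a greedy \emph{heavy-hitter} selection. Start with $R_1=\varnothing$ and uncovered set $U=V$. While some service $s\in S\setminus R_1$ satisfies $|N_G(s)\cap U|\geq \sqrt n$, add $s$ to $R_1$ and remove $N_G(s)\cap U$ from $U$. Each iteration covers at least $\sqrt n$ previously uncovered validators, so $|R_1|\leq \sqrt n$; in particular every validator appears in at most $\sqrt n$ services of $R_1$. After the phase terminates, every remaining service has fewer than $\sqrt n$ neighbors in $U$.

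In the second phase, I cover the residual set $U$ using the bound $K\leq \max_s d_G(s)$ from Proposition~\ref{simple_upper_bounds}, applied to the sub-bipartite graph induced on $U$ and the services touching $U$. Since in that sub-graph every service has degree at most $\sqrt n-1$, the same construction as in the proof of Proposition~\ref{simple_upper_bounds} produces a cover $R_2$ of $U$ whose maximum incidence at every $v\in U$ is at most $\sqrt n-1$. Setting $R=R_1\cup R_2$ gives a cover of $V$, and for every $v\in V$ we have $k_v^R\leq k_v^{R_1}+k_v^{R_2}\leq \sqrt n+(\sqrt n-1)=2\sqrt n-1$, from which the theorem follows via $RS(G)\leq K$.

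The step I expect to be the main obstacle is Phase~2: the bound $k_v^{R_2}\leq \sqrt n-1$ derived by restricting to the sub-graph on $U$ only directly controls incidences at validators in $U$, while services chosen in $R_2$ may still have neighbors in $V\setminus U$ that inflate the incidence count for those validators. Overcoming this requires choosing $R_2$ so that its services' external neighbors contribute only bounded incidences, or equivalently strengthening the cover-based construction in Proposition~\ref{simple_upper_bounds} so that its incidence estimate holds for all of $V$ rather than only within the sub-graph. The remaining combining step and the final appeal to Proposition~\ref{simple_upper_bounds} are routine.
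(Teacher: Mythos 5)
Your overall two-phase decomposition with threshold $\sqrt{n}$ is the right skeleton (Phase~1 is exactly the paper's construction of the sets $S_L$, $V_L$), but the plan of routing everything through the cover bound $RS(G)\leq K$ cannot be repaired: the quantity $K$ is simply not bounded by $2\sqrt{n}-1$ in general, so Phase~2 is not an ``obstacle'' but an impossibility. Consider the sunflower graph with validators $v_1,\dots,v_{n-1},w$ and services $s_1,\dots,s_{n-1}$ with $N_G(s_i)=\{v_i,w\}$, made secure by taking, say, $\pi_{s_i}=1$, $\alpha_{s_i}=\tfrac12$ and $\sigma\equiv 10$. Every service has degree $2<\sqrt{n}$, so your Phase~1 selects nothing and $U=V$; but the only cover of $V$ by service neighborhoods is all of $S$, in which $w$ has incidence $n-1$, so $K=n-1\gg 2\sqrt{n}-1$. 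This example also shows why your key step fails: a bound on \emph{service} degrees does not yield a cover with small \emph{validator} incidence (incidences are controlled by validator degrees), and the $\max_{s\in S}d_G(s)$ bound in Proposition~\ref{simple_upper_bounds} is not a cover bound at all --- it comes from adding $\pi_s$ to every neighbor of $s$ and charging the total cost $\sum_s d_G(s)\pi_s$ against $T(\pi)\leq T(\sigma)$.

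The missing idea is that the residual services must be handled by this value-based accounting rather than by incidences. Keep Phase~1: replicating the original stakes of validators in $V_L$ across the at most $\sqrt{n}$ selected services costs at most $(\sqrt{n}-1)T(\sigma)$ and secures every $s\in S_L$ via the construction $(\sigma_{v_1},\dots,\sigma_{v_{d_G(s)}})$. For each remaining service $s\in S\setminus S_L$, which by construction has at most $\sqrt{n}$ neighbors outside $V_L$, add $\pi_s$ to those neighbors and secure $s$ with a construction of type $(0,\dots,0,\pi_s+\sigma_v,0,\dots,0)$; the cost is at most $\sqrt{n}\,\pi_s$ per service, hence at most $\sqrt{n}\,T(\pi)\leq \sqrt{n}\,T(\sigma)$ in total, where the last inequality uses the security of $G$. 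Summing the two phases gives $RS(G)\leq 2\sqrt{n}-1$. In the sunflower example this works out to roughly $2T(\pi)\ll (n-1)T(\sigma)$, which is exactly the gap your cover-based argument cannot close.
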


\begin{proof}
The proof relies on a greedy covering of high-degree services. The proof combines constructions from the proof of Proposition~\ref{simple_upper_bounds}.
    Initialize the set $S_L:=\emptyset$ and $V_L:=\emptyset$. The following procedure is repeated until it is no longer possible: At each step, find a service $s$ in $S\setminus S_L$ such that its degree in the set $V\setminus V_L$ is at least $\sqrt{n}$. Update the set $S_L:=S_L\cup s$ and $V_L:=V_L\cup N_G(s)$. Note that after at most $\sqrt{n}$ steps, it is not possible to find a new service $s$ and, hence, the procedure stops. That is, for the resulting set of services $S_L$, we have $|S_L|\leq \sqrt{n}$. This, in particular, implies that the maximum outdegree of a validator $v$ in $V_L$ to $S_L$, denoted by $d_{S_L}(v)$, is upper bounded by $\sqrt{n}$. 
    
    Add to each validator $v\in V_L$ a stake amount $(d_{S_L}(v)-1)\sigma_v$. This allows to secure any service $s\in S_L$ by using a construction $(\sigma_{v_1},...,\sigma_{v_{d_G(s)}})$ from the proof of Proposition~\ref{simple_upper_bounds}. Together with securing all services in $S_L$, we also utilize validator stakes in $V_L$, and this is done by adding at most $(\sqrt{n}-1) T(\sigma)$ extra stake.

    We can secure any service $s\in S\setminus S_L$ by adding stakes $\pi_s$ for to all its neighbors in $V\setminus V_L$, that is, by adding $d_{V\setminus V_L}(s)\pi_s$ extra stakes. By the construction of the set $S_L$ we know that $d_{V\setminus V_L}(s)$ is upper bounded by $\sqrt{n}$ for any $s\in S\setminus S_L$. We can, at the same time, utilize the remaining stakes in $V\setminus V_L$, by using a construction of type $(0,..,0,\pi_s+\sigma_v,..0)$ for $v\in V\setminus V_L$. The last step is possible, because for each validator in $V\setminus V_L$, there must be at least one service in $S\setminus S_L$ it secures. The total extra stake is upper bounded by $\sqrt{n}T(\pi_{S\setminus S_L})\leq \sqrt{n}T(\sigma)$, where $T(\pi_{S\setminus S_L})$ denotes the sum $\sum_{s\in S\setminus S_L}\pi_s$. By summing up two values of extra stakes used to secure $S_L$ and $S\setminus S_L$ sets, and also utilizing all stakes in $V$, we get $$RS(G)\leq \frac{(\sqrt{n}-1)T(\sigma)+\sqrt{n}T(\sigma)}{T(\sigma)}=2\sqrt{n}-1,$$ a required bound of the Theorem claim.
\end{proof}

Note that in the proofs of Proposition~\ref{simple_upper_bounds} and Theorem~\ref{non_trivial_upper_bound}, we did not use any property of $\alpha$ values. Both upper bounds use combinatorial features of the underlying restaking graph $G$ and hold for any values of security parameters $\alpha$. Next, we show an upper bound that is a function of the security parameters $\alpha_s$. We show the following upper bound on $RS(G)$: 

\begin{proposition}\label{RS_upper_bound_alpha}
Assume that services are labeled in the way that the security parameters are sorted in increasing order $\alpha_1\leq \alpha_2\leq ... \leq \alpha_m$. Then, $RS(G)\leq \frac{1}{\alpha_2}$.
\end{proposition}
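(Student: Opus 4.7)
The plan is to leverage construction 3 from the proof of Proposition~\ref{simple_upper_bounds} to handle the service with the smallest security parameter $\alpha_1$ (whose value $1/\alpha_1$ could otherwise be arbitrarily large), and to handle every other service $s \geq 2$ by directly injecting the amount $\pi_s/\alpha_s$ of extra stake, which is now bounded by $\pi_s/\alpha_2$.

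The key underlying observation I would first record is a simple sufficiency criterion: if a PoS allocation $(c^v_s)_{v \in N_G(s)}$ has total $T_s := \sum_{v \in N_G(s)} c^v_s \geq \pi_s/\alpha_s$, then the PoS protocol for $s$ is secure. Indeed, any attacking subset $W$ satisfies $\sum_{v \in W} c^v_s > \alpha_s T_s \geq \pi_s$, so the attack loses strictly more stake than the service value it captures. A consequence I would use repeatedly is that adding more stake to any validator securing a service never decreases security, since it only raises $T_s$.

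Next I would describe the construction. For service $1$, apply construction 3: set $c^v_1 = \sigma_v$ for every $v \in N_G(1)$; this is secure by the same reasoning as in Proposition~\ref{simple_upper_bounds}, since any profitable PoS attack on service $1$ would yield a profitable attack in the original (secure) graph $G$. For each service $s \in \{2, \dots, m\}$, pick an arbitrary validator $v_s \in N_G(s)$, increase its stake by $\pi_s/\alpha_s$, and allocate this amount to $s$, so that $T_s \geq \pi_s/\alpha_s$. Finally, every validator $v \notin N_G(1)$ whose original $\sigma_v$ is not yet placed allocates it in full to an arbitrary service in $N_G(v)$ (which exists, since an isolated validator would necessarily have zero stake). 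This yields $\sigma'_v \geq \sigma_v$ for all $v$, keeps service $1$ secure via construction 3, and keeps every other service secure via the total-stake criterion above.

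The bound then falls out: the total added stake is exactly $\sum_{s \geq 2} \pi_s/\alpha_s$, and using $\alpha_s \geq \alpha_2$ for $s \geq 2$ together with the inequality $T(\sigma) \geq T(\pi)$ noted before Proposition~\ref{simple_upper_bounds}, we obtain
\[
T(\sigma') - T(\sigma) = \sum_{s \geq 2} \frac{\pi_s}{\alpha_s} \leq \frac{1}{\alpha_2}\sum_{s \geq 2} \pi_s \leq \frac{T(\pi)}{\alpha_2} \leq \frac{T(\sigma)}{\alpha_2},
\]
which after dividing by $T(\sigma)$ gives $RS(G) \leq 1/\alpha_2$. The most delicate point is resisting the natural urge to allocate $\pi_1/\alpha_1$ uniformly for service $1$ — which would ruin the bound when $\alpha_1$ is small — and instead relying on construction 3, whose security derives from $G$ itself rather than from a total-stake inequality. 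Everything else is routine bookkeeping once that choice is fixed.
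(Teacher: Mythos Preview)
Your proof is correct and essentially identical to the paper's: both secure service $1$ via construction~3 (relying on the security of $G$ itself rather than a total-stake bound), inject $\pi_s/\alpha_s$ of extra stake for each $s\geq 2$, and then bound $\sum_{s\geq 2}\pi_s/\alpha_s \leq T(\pi)/\alpha_2 \leq T(\sigma)/\alpha_2$. The only cosmetic difference is where the leftover $\sigma_v$ for $v\notin N_G(1)$ is deposited---the paper absorbs it service by service while zeroing out, whereas you dump it into an arbitrary neighbor at the end---and this does not affect the bound.
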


\begin{proof}
        For the service $s=1$, with the security parameter $\alpha_1$, allocate the original stakes of the restaking graph in the PoS protocol: $(c_1,...,c_{d_G(1)}) = (\sigma_{v_1},...,\sigma_{v_{d_G(1)}})$, where $N_G(1) = \{v_1,...,v_{d_G(1)}\}$ is the set of validators staked on service $s=1$. This PoS protocol is secure by the assumption tjat tje original restaking graph is secure. Repeat the following procedure for all services $s>1$:
        
        \begin{enumerate}
            \item add stake $t_s=\frac{\pi_s}{\alpha_s}$ to any validator $v\in N_G(s)$,
            \item use construction $(\sigma_{v_1},...,\sigma_{v_{d_G(s)}})$ to secure service $s$ in PoS protocol, where $N_G(s) = \{v_1,...,v_{d_G(s)}\}$ is the set of validators staked on service $s$,
            \item set all stake values in the vector $(\sigma_{v_1},...,\sigma_{v_{d_G(s)}})$ to $0$.
        \end{enumerate} 
        
        Each PoS derived at step 2. is secure, as the total stake allocated for a service $s>1$ is at least $\frac{\pi_s}{\alpha_s}$. This is an observation made in the proof of Proposition~\ref{eigenlayer_not_efficient}.
        
        In this way, we add $\sum_{s\in S, s>1}\frac{\pi_s}{\alpha_s}$ total extra stake to the original stakes of validators. This implies the following chain of inequalities: $$RS(G)\leq \frac{\sum_{s\in S, s>1}\frac{\pi_s}{\alpha_s}}{T(\sigma)}\leq \frac{1}{\alpha_2}\frac{\sum_{s\in S,s>1}\pi_s}{T(\pi)}\leq \frac{1}{\alpha_2}, $$
        finishing the proof of the proposition claim.
\end{proof}

Next, we construct an example of a secure restaking graph deriving a lower bound on the restaking savings that is linear in the number of services and asymptotically matches to all upper bounds derived so far.

\begin{theorem}\label{lower_bound}
For any $m\in \mathbb{N}$, there are instances of a secure restaking graph $G$ in which the restaking savings $RS(G)\in \Omega(m)$. 
\end{theorem}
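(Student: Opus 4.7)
The plan is to exhibit, for each $m$, a concrete secure restaking graph $G_m$ with restaking savings $RS(G_m)=\Omega(m)$. The matching upper bound $O(m)$ is delivered by Proposition~\ref{RS_upper_bound_alpha} as soon as $1/\alpha_2=\Theta(m)$, and by Theorem~\ref{non_trivial_upper_bound} as soon as $n=\Theta(m^2)$, so the substantive work is the lower bound and the asymptotic tightness against all the upper bounds will force $n=\Theta(m^2)$, $\alpha_2=\Theta(1/m)$, and some service of degree $\Theta(m)$ in the construction.

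The construction I would propose has $m$ services each of value $\pi_s=1$ and security parameter $\alpha_s=1/m$, together with $n=\Theta(m^2)$ validators of small uniform stake $\sigma_v$, joined by a dense overlapping bipartite structure (for instance a symmetric regular incidence design on $m$ services and $\Theta(m^2)$ validators), with $\sigma_v$ calibrated so that $T(\sigma)=T(\pi)=m$, the minimum total stake consistent with security. Security of $G_m$ would then be verified by a direct case analysis on $|A|$: small attacker sets $A\subseteq V$ fall below the threshold $\alpha_s=1/m$ in every neighborhood $N_G(s)$, while larger $A$ have loss $\sum_{v\in A}\sigma_v$ that keeps pace with the gain $\sum_{s\in M(A)}\pi_s\leq m$, with the inequality tight only at $A=V$.

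The main step is the lower bound. For any $\sigma'\in D(\sigma)\cap\mathcal{W}(G_m)$ and any division $c^v_s$, one has $T(\sigma')=\sum_s T_s$ with $T_s=\sum_{v\in N_G(s)} c^v_s$, so it suffices to show $\sum_s T_s=\Omega(m^2)$. For each service $s$, security of the induced PoS instance forces either a \emph{concentrated} distribution, in which some $v\in N_G(s)$ carries $c^v_s\geq\pi_s=1$, or a \emph{spread} distribution with $T_s\geq\pi_s/\alpha_s=m$. Since each $\sigma_v\ll 1$, appointing $v$ as concentrator for $s$ requires $\sigma'_v\geq 1$, i.e., roughly a full unit of added stake per concentrated service. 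The bipartite structure is chosen so that concentrator roles cannot be spread disjointly across distinct validators without leaving many other services (whose neighborhoods overlap with the concentrator's) to be secured by the spread strategy; in either case every service ends up contributing $\Omega(m)$ to $T(\sigma')$.

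The hardest part will be ruling out hybrid divisions that combine concentration on some services with spreading on others. I expect to handle this with a charging argument that assigns each unit of $T(\sigma')$ to one service and uses the overlap among service neighborhoods to bound the per-service charge from below by $\Omega(m)$ no matter which strategy is used on that service. Summing then yields $T(\sigma')=\Omega(m^2)=\Omega(m\cdot T(\sigma))$ and hence $RS(G_m)=\Omega(m)$; combining with Proposition~\ref{RS_upper_bound_alpha} proves $RS(G_m)\in\Theta(m)$, as claimed.
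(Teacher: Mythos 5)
There is a genuine gap: the construction you propose cannot work, for two independent reasons, and the ``charging argument'' you defer is precisely the part that has no valid instantiation on your graph. First, a dense overlapping structure with uniform small stakes and $T(\sigma)=T(\pi)=m$ is not secure as a restaking graph. With $\alpha_s=1/m$ and total stake $T_s$ on service $s$ equal to $|N_G(s)|/m$ (each validator holding $m/n\approx 1/m$), the cheapest attacking coalition for $s$ has stake just above $\alpha_s T_s=|N_G(s)|/m^2\le 1/m\ll\pi_s=1$, so a handful of validators can profitably attack one service; and if neighborhoods overlap densely, a coalition of stake slightly above $1$ sits above the threshold for \emph{all} $m$ services simultaneously and gains $m$ while losing about $1$. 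Avoiding this forces either $T(\sigma)$ of order $\sum_s\pi_s/\alpha_s=m^2$ (killing the savings) or near-disjoint neighborhoods (killing the density your lower-bound heuristic relies on). Second, even granting a secure variant, your own dichotomy defeats the lower bound: with $\Theta(m^2)$ validators and only $m$ services there are plenty of distinct concentrators, so one can funnel all the small stakes onto a single service (spread-securing it at $T_s\ge m$, which costs nothing extra since the total small stake is $m$) and add one fresh unit of stake to a distinct validator per remaining service, allocated entirely to that service as $(0,\dots,0,\pi_s,0,\dots,0)$. That yields $T(\sigma')\le 2m$ and $RS(G_m)=O(1)$, not $\Omega(m)$. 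The overlap you invoke to block disjoint concentrators works in the wrong direction: more overlap gives validators \emph{more} freedom to park their residual stake harmlessly.

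The paper's construction is built on the opposite design choices, and they are each essential. Each service has its own \emph{disjoint} set of $m$ degree-one validators with stake $1/m$ (so the graph is secure against small-validator coalitions, and so each small validator's stake is \emph{trapped} on its unique service and cannot be funneled away), plus a single large validator of stake $2m$ shared by all services (which makes the graph secure while keeping $T(\sigma)=3m$ far below $\sum_s\pi_s/\alpha_s=m(2m+1)$). The lower bound then comes from a per-service subset-sum analysis of any dominating allocation $(a_1,\dots,a_m,a_{m+1})$ with $a_i\ge 1/m$ forced for the trapped validators: either some subset sums into an attackable range, forcing $T_s\ge m+1$ or $2m+1$, or no subset does, which forces about half the $a_i$ to be at least $\pi_s=2$; every case costs at least $m$ extra per service, hence $m^2-2m$ overall against $T(\sigma)=3m$. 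If you want to salvage your write-up, you should replace the dense design with this star-of-stars structure and carry out the three-case subset-sum analysis explicitly; the concentrated-versus-spread dichotomy alone, without trapped degree-one stakes, provably cannot give more than a constant lower bound.
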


\begin{proof}
    Assume the number of validators is $n=m^2+1$ for some $m\in \mathbb{N}$. The edge set $E$ consists of $m$ edges $(s,m^2+1)$, for any $1\leq s \leq m$ and $m^2$ edges of type $(s,(s-1)m+j)$, for any $1\leq s \leq m$, $1 \leq j \leq m$. 

    The value of service $s$ is defined as $\pi_s=2$ for any $1\leq s\leq m$. Any validator with index less than $v<m^2+1$ has stake $\sigma_v=\frac{1}{m}$. The last validator has stake $\sigma_{m^2+1}=2m$. Security parameter of service $s$ is defined as $\alpha_s=\frac{1}{2m+1}$. 
    
    Intuitively, there is one validator with large stake and many validators with equal small stakes. Each service is easy to attack, but the large staked validator does not find it profitable to attack. On the other hand, each service is hard enough to attack by all small staked validators that secure it. Each service is secured by $m$ small staked validators.
    See the Figure 1 for $m=3$.

    \begin{figure}
\centering
\begin{tikzpicture}[thick,
  every node/.style={draw,circle},
  snode/.style={fill=myblue},
  vnode/.style={fill=mygreen},
  every fit/.style={ellipse,draw,inner sep=-2pt,text width=2.5cm},
  ->,shorten >= 3pt,shorten <= 3pt
]
\begin{scope}[start chain=going below,node distance=7mm]
  \node[snode,on chain] (s1) [label=left: \text{2}] {};
  \node[snode,on chain] (s2) [label=left: \text{2}] {};
  \node[snode,on chain] (s3) [label=left: \text{2}] {};
\end{scope}

\begin{scope}[xshift=4cm,yshift=-0.0cm,start chain=going below,node distance=7mm]
  \node[vnode,on chain] (v1) [label=right: \text{1/3}] {};
  \node[vnode,on chain] (v2) [label=right: \text{1/3}] {};
  \node[vnode,on chain] (v3) [label=right: \text{1/3}] {};
  \node[vnode,on chain] (v4) [label=right: \text{1/3}] {};
  \node[vnode,on chain] (v5) [label=right: \text{1/3}] {};
  \node[vnode,on chain] (v6) [label=right: \text{1/3}] {};
  \node[vnode,on chain] (v7) [label=right: \text{1/3}] {};
  \node[vnode,on chain] (v8) [label=right: \text{1/3}] {};
  \node[vnode,on chain] (v9) [label=right: \text{1/3}] {};

  \node[vnode,on chain] (v10) [label=right: 6] {};
\end{scope}

\node [myblue,fit=(s1) (s2) (s3),label=above:$S$] {};
\node [mygreen,fit=(v1) (v10),label=above:$V$] {};

\draw (v1) -- (s1);
\draw (v2) -- (s1);
\draw (v3) -- (s1);
\draw (v4) -- (s2);
\draw (v5) -- (s2);
\draw (v6) -- (s2);
\draw (v7) -- (s3);
\draw (v8) -- (s3);
\draw (v9) -- (s3);
\draw (v10) -- (s1);
\draw (v10) -- (s2);
\draw (v10) -- (s3);
\end{tikzpicture}
\label{figure}
\caption{Example construction for $m=3$.}
\end{figure}

    First, we show that $G$ is a secure restaking graph. Note that the large validator, indexed with ${m^2+1}$ and with stake $\sigma_{m^2+1}=2m$ does not want to participate in any attack, as it is losing at least the total value available across all services. No service $s\in \{1,2,...,m\}$ can be attacked by all validators securing it other than the large validator, since the service security parameter is $\alpha_s=\frac{1}{2m+1}$, while all small validators indexed $((s-1)m+1,(s-1)k+2,...,sm)$ make up stake $1$ in total. The total available stakes securing a service $s$ is $2m+1$.

To show a lower bound on $RS(G)$, first we show that when any service with value $2$ and security parameter $\frac{1}{2m+1}$ gets initial allocation of stakes in PoS protocol equal to a vector $(\frac{1}{m},...,\frac{1}{m}, 0)$, there needs to be at least an extra $m$ stake to make the PoS protocol secure. Suppose we add stakes such that new vector of stakes becomes $(a_1,...,a_m, a_{m+1})$. From now on we distinguish three cases:

\begin{enumerate}
    \item If there is a subset of validators, whose stakes sum up to a number between $1$ and $2$, then it must be that the total stake, $\sum_{i=1}^{m+1}a_i$, is at least $2m+1$. Otherwise, this subset would be able to attack the service and the attack would be profitable. This implies that the extra stake size is at least $2m=2m+1-m\frac{1}{m}$.
    \item If there is a subset of validators, whose stakes sum up to a number between $0.5$ and $1$, then it must be that the total stake, $\sum_{i=1}^{m+1}a_i$, is at least $m+1$. Otherwise, this subset would be able to attack the service and the attack would be profitable. This implies that the extra stake size is at least $m$.
    \item If there is no such subset for either of the cases 1. and 2., then it must be that more than half of the numbers among $(a_1,...,a_m,a_{m+1})$ are  at least $2$, which implies that at least $\frac{m+1}{2}(2-\frac{1}{m})\geq m$ extra stake was added to the initial stake distribution of $(\frac{1}{m},...,\frac{1}{m}, 0)$.    
\end{enumerate}

Initially, $2m$ stake of validator indexed $m^2+1$ is available, therefore, we need to add at least extra $m\cdot m-2m$ stake:

\begin{itemize}
    \item there are $m$ services to secure,
    \item for each service we need to add extra stake at least $m$.
\end{itemize}

On the other hand, original total stake amount is $T(\sigma)=3m$, therefore, restaking savings satisfy $$RS(G)\geq \frac{m\cdot m - 2m}{3m}\in \Omega(m).$$ This shows a lower bound on the required stake.

It is easy to show that $O(m^2)$ total stake is enough to add to validators in $V$, so that the resulting stake vector $\sigma\in \mathcal{W}(G)$. It can, for example, be done by adding the stake $2m^2-m$ to the validator $m^2+1$ so that its total stake becomes $(2m+1)m$. Then, assume the validator splits this total stake into $m$ equal stakes of $2m+1$, for each $m$ PoS protocols. The small stakers in these protocols can not attack any of the services even if they all coordinate, while the large staker does not find any attack profitable. This proves the claim of the theorem,  $RS(G)\in \Theta(m)$.

\end{proof}

The sets $S_L$ and $V_L$ from the proof of Theorem~\ref{non_trivial_upper_bound} end up being the sets $S$ and $V$, respectively. The highest degree in $V_L$ is equal to $m=\floor{\sqrt{n}}$. This provides an asymptotically matching lower bound to the upper bound result derived in Theorem~\ref{non_trivial_upper_bound}.

\begin{corollary}
There are instances of secure restaking graphs $G$ in which $RS(G)\in \Theta(\sqrt{n})$.    
\end{corollary}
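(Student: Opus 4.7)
The plan is to leverage directly the family of graphs constructed in the proof of Theorem~\ref{lower_bound} and simply re-parameterize the bound in terms of $n$ rather than $m$. In that construction, we have $n = m^2 + 1$ validators, so $m = \sqrt{n-1} = \Theta(\sqrt{n})$. The proof of Theorem~\ref{lower_bound} already shows that this family is a secure restaking graph and that $RS(G) \in \Omega(m)$, which immediately yields $RS(G) \in \Omega(\sqrt{n})$ for this family. Combining this with the matching upper bound $RS(G) \leq 2\sqrt{n}-1$ from Theorem~\ref{non_trivial_upper_bound} gives $RS(G) \in \Theta(\sqrt{n})$.

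More concretely, the first step is to instantiate the construction from Theorem~\ref{lower_bound} for an arbitrary $m \in \mathbb{N}$, obtaining a secure restaking graph $G$ with $n = m^2+1$ validators and $m$ services; the lower bound $RS(G) \geq (m^2 - 2m)/(3m) \in \Omega(m)$ is carried over verbatim. Substituting $m = \sqrt{n-1}$ gives $RS(G) \in \Omega(\sqrt{n})$. For the matching upper bound, Theorem~\ref{non_trivial_upper_bound} applies to any secure restaking graph and supplies $RS(G) \leq 2\sqrt{n}-1$. The two bounds together prove the corollary.

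As a sanity check, one may verify the hint in the paragraph preceding the corollary: running the procedure in the proof of Theorem~\ref{non_trivial_upper_bound} on this instance indeed exhausts all of $S$ and $V$, because every service has degree $m+1 \geq \sqrt{n}$ in the full validator set, and the maximum outdegree into $S_L$ is exactly $m = \lfloor \sqrt{n} \rfloor$ (attained by the large validator). Thus the upper bound mechanism of Theorem~\ref{non_trivial_upper_bound} is tight up to a constant factor on this family. No step here is delicate: the only content beyond a change of variables is the observation that the existing lower-bound construction has $n = \Theta(m^2)$ validators, so $m = \Theta(\sqrt{n})$, and I do not anticipate any real obstacle in the write-up.
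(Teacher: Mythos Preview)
Your proposal is correct and follows essentially the same approach as the paper: take the family from Theorem~\ref{lower_bound} with $n=m^2+1$, so $m=\Theta(\sqrt{n})$, carry over the $\Omega(m)$ lower bound, and match it against the $2\sqrt{n}-1$ upper bound of Theorem~\ref{non_trivial_upper_bound}. Your sanity-check paragraph is also consistent with the paper's remark that $S_L=S$, $V_L=V$, and the maximum degree into $S_L$ equals $m=\lfloor\sqrt{n}\rfloor$.
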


Note that in the lower bound example construction from Theorem~\ref{lower_bound}, inverse of the security parameter $\frac{1}{\alpha_s}=2m+1$ for any $s\in S$, hence, this construction gives an asymptotically matching lower bound to the $\frac{1}{\alpha_s}-1$ upper bound as well. The upper bounds obtained in Proposition~\ref{simple_upper_bounds}, $\max_{s\in S}d_G(s)$ and $K$, are also of the order $\Theta(m)$, hence asymptotically matching the lower bound from Theorem~\ref{lower_bound}. 

The total stake in the example of Theorem~\ref{lower_bound} is equal to $3m$, while the lower bound requirement on the total stake derived from condition~\eqref{sufficient_eigenlayer} is equal to $\sum_{s\in S}\frac{\pi_s}{\alpha_s}=m(2m+1)$, which shows that in some cases the condition~\eqref{sufficient_eigenlayer} requires a substantially higher total stake than would be enough to secure the restaking graph. Moreover, the condition for any validator $v$ except the last validator $v=m^2+1$ translates into $$\frac{1}{m}\geq \frac{1/m}{2m+1} \cdot \frac{2}{\frac{1}{2m+1}}=\frac{2}{m},$$ which is violated by a factor two. The condition for the validator $v=m^2+1$
is violated by a very large multiplicative factor, as it is equivalent to: 

$$2m\geq \sum_{s=1}^{m}\frac{2m}{2m+1} \cdot \frac{2}{\frac{1}{2m+1}}=4m^2.$$

From the proof of Theorem~\ref{lower_bound} we can also observe that at least one of the  individual validator stakes should be increased by a factor of $\Theta(m)$. Therefore, if restaking savings were to be measured like over-collaterization factor of individual validators, as in~\cite{robust_restaking}, the asymptotics would be the same. This adds to the strength of the lower bound example.

\begin{definition}
    Let $RS:=\sup_{G}RS(G)$ denote the restaking savings in the extremal case on all secure restaking graphs. 
\end{definition}

Then, the direct corollary of Theorem~\ref{lower_bound} is the following:
\begin{corollary}\label{RS_infty}
    $RS =\infty$.
\end{corollary}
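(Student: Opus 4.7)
The plan is to deduce the corollary as an immediate consequence of Theorem~\ref{lower_bound}. That theorem already produces, for every $m \in \mathbb{N}$, a concrete secure restaking graph $G_m$ (the one depicted in Figure 1 for $m=3$) whose restaking savings satisfy $RS(G_m) \in \Omega(m)$; explicitly, the proof there establishes $RS(G_m) \geq (m^2 - 2m)/(3m)$.

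Given this, the argument is essentially a one-liner. I would fix an arbitrary constant $C > 0$, choose $m$ large enough that $(m^2 - 2m)/(3m) > C$, and invoke Theorem~\ref{lower_bound} to obtain a secure restaking graph $G_m$ with $RS(G_m) > C$. Since $RS = \sup_G RS(G)$ is taken over all secure restaking graphs, we conclude $RS > C$. As $C$ was arbitrary, this forces $RS = \infty$.

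There is no real obstacle here; the only thing to verify is that the family $\{G_m\}_{m \in \mathbb{N}}$ constructed in Theorem~\ref{lower_bound} consists of legitimate secure restaking graphs, which was already checked inside that proof (the large validator of stake $2m$ never finds an attack profitable, and the small validators jointly controlling stake $1$ on each service fall below the threshold $\alpha_s \cdot (2m+1) = 1$). Hence the supremum in the definition of $RS$ is unbounded.
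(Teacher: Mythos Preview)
Your proposal is correct and matches the paper's own treatment: the paper states this corollary as a direct consequence of Theorem~\ref{lower_bound} without giving any further argument, and your one-liner (pick $m$ so that $(m^2-2m)/(3m)=(m-2)/3>C$) is exactly the intended deduction.
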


That is, restaking savings are not bounded by a constant, and they may grow unboundedly as the number of services and validators grows. Note that we need both the number of services and the number of validators to grow unboundedly. Otherwise, the upper bounds obtained in Proposition~\ref{simple_upper_bounds} ensure that $RS(G)$ is constant. 

\subsection{PoS Savings}

Suppose a PoS protocol centered at any service $s\in S$, having security parameter $\alpha_s$ and allocated $\sigma_v^s$ stakes for any $v\in N_G(s)$ is safe, by the same definition as above: no attacking set of validators finds it profitable to attack service $s$. 
Consider the total stake for any validator $v\in V$ defined as $\sigma_v = \sum_{s\in N_G(v)}\sigma_v^{s}$. That is, we aggregate all stakes that the validator has staked across all PoS protocols in which it participates. We ask a similar question to the previous section:

\begin{question}
    How much stake do we need to add to the aggregate validator stakes, $\sigma_v$, so that the resulting restaking graph is secure? 
\end{question}

The security of the restaking graph is considered by the original definition~\ref{def:secure_restaking}. Let $U(G)$ denote the space of stake vectors for which $G$ restaking graph is secure. Then, for each restaking graph, we define: 

\begin{definition}
    For a given aggregate restaking graph $G$, PoS savings, $PoSS(G)$ denotes a minimum additional total value to stakes in $\sigma$ vector in relative terms to $T(\sigma)$ so that the resulting restaking graph $G$ is secure:
    \begin{equation*}
        PoSS(G)=\min_{\sigma'\in D(\sigma)\cap U(G)}\frac{T(\sigma')-T(\sigma)}{T(\sigma)}
    \end{equation*}
\end{definition}

$PoSS(G)$ thus captures the inefficiency of aggregation, where interdependencies may necessitate over-collateralization to restore security.
First, we note the following inequality that relates the total value to capture from all services and the total staked amount of validators. It will be useful to obtain an upper bound on the PoS savings.

\begin{lemma}\label{lemma:inequality}
    The total stake amount of all validators is more or equal than the total value derived from all projects:  $T(\sigma)\geq T(\pi)$.
\end{lemma}

\begin{proof}
    A PoS protocol centered at $s\in S$ is secure implies that 
    
    \begin{equation}\label{PoS_condition}
        \sum_{v\in N_G(s)}\sigma_v^s\geq \pi_s,
    \end{equation} as otherwise all validators in $N_G(s)$ would be able to profitably attack the service $s$. Summing up left hand side of~\eqref{PoS_condition} for all $s\in S$ and the definition of $\sigma_v$ imply the claim of the lemma:

    \begin{equation*}
        T(\sigma)=\sum_{v\in V}\sigma_v = \sum_{s\in S}\sum_{v\in N_G(s)}\sigma_v^s \geq \sum_{s\in S}\pi_s = T(\pi).
    \end{equation*}
\end{proof}

Similarly to restaking savings $RS$, we can define a measure $PoSS$, which is maximized on all instances of the underlying graph $G$ and initial secure PoS protocols.

\begin{definition}
Let $PoSS:=\sup_{G,\sigma_v^S}PoSS(G)$ denote the extreme value of PoS savings, where the stake vector $\{\sigma_v^s: v\in N_G(s)\}$ constitutes the secure PoS protocol for any $s\in S$.
\end{definition}

We obtain the following lower bound on the $PoSS$.

\begin{proposition}\label{prop:lower_bound_PoSS}
    There are instances of $G$ so that    $PoSS(G)\geq m - 1.$
\end{proposition}

\begin{proof}
    Suppose there are $m$ services, each with a value of $\pi_s=1$. Suppose there are $m$ validators, and assume each validator secures all $m$ services. 
For each service $s$, there is its corresponding dedicated validator $s$ with a stake of $\sigma_s^s=1$, and the other $m - 1$ validators a stake of $\epsilon$ each. 
Set the security parameter for each service to $\alpha_s = \frac{(m-1)\epsilon}{1 + (m-1)\epsilon}$. First, we check the security of these PoS protocols. An attacking subset containing the dedicated validator would lose at least 1, which results in no profit. A subset excluding the dedicated validator has a maximum combined stake of $(m-1)\epsilon$, which gives a stake fraction of exactly $\alpha_s$. 
Because an attack requires a stake fraction strictly greater than $\alpha_s$, this subset cannot attack. Therefore, the independent PoS protocols are secure.
The initial total stake is $T(\sigma) = m(1 + (m-1)\epsilon)$.

When aggregated into a restaking graph, each validator's total stake becomes $1 + (m-1)\epsilon$. 
Because all validators secure all services, the total stake on any given service is $m(1 + (m-1)\epsilon)$.
Consequently, the fraction of the total stake controlled by a single validator on any service is exactly $\frac{1}{m}$. If we choose $\epsilon$ to be sufficiently small (specifically $\epsilon < \frac{1}{(m-1)^2}$), then $\frac{1}{m}$ becomes strictly greater than $\alpha_s$. This means any single validator has enough stake fraction to independently attack any service. 
Since this applies to all services, every validator can profitably attack all $m$ services, gaining a value of $m$ while only losing a stake of roughly 1.

To secure this aggregated graph, we must eliminate this profitable attack for all $m$ individual validators. For any given validator, we can remove the attack in one of two ways:

\begin{itemize}
\item Increase the total stake on enough services to force the validator's stake fraction to drop below $\alpha_s$.
\item Increase the validator's total stake to $\ge m$, so that attacking m services is no longer profitable.
\end{itemize}

Because $\epsilon$ can be chosen to be arbitrarily close to $0$, blocking the attack via the $\alpha_s$ threshold requires an arbitrarily large amount of extra stake (inversely proportional to $\epsilon$) to inflate the denominator of the attacking fraction. 
Therefore, the strictly cheaper option is to increase every validator's stake to $m$.
The extra stake required for each validator is $m - (1 + (m-1)\epsilon)$. Across all $m$ validators, the total extra stake $E$ is $m(m - 1 - (m-1)\epsilon)$.

Then, PoSS for this instance is the ratio of the extra stake to the initial total stake:
$${PoSS}(G) = \frac{E}{T(\sigma)} \ge \frac{m(m - 1 - (m-1)\epsilon)}{m(1 + (m-1)\epsilon)}.$$

Taking the limit as $\epsilon$ approaches 0, this ratio simplifies exactly to $m - 1$.
\end{proof}

The lower bound can also be interpreted as $n-1$, since by construction the number of validators is equal to the number of services. This stands in contrast to the upper bound obtained for restaking savings, which was of order $\Theta(\sqrt{n})$. 
Similarly to Corollary~\ref{RS_infty}, we obtain:

\begin{corollary}\label{POSS_infty}
$PoSS=\infty.$
\end{corollary}

Next, we provide an upper bound on $PoSS(G)$, similar to the upper bound obtained for restaking savings in Proposition~\ref{simple_upper_bounds}.

\begin{proposition}\label{prop:upper_bound_PoSS}
    $PoSS(G)\leq \max_{s\in S}d_G(s)-1$. 
\end{proposition}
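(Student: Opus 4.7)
The plan is to exhibit a simple stake augmentation that yields a secure restaking graph, paralleling the first construction used in the proof of Proposition~\ref{simple_upper_bounds} for the restaking savings bound. Specifically, I would define $\sigma'_v := \max\{\sigma_v,\ \sum_{s\in N_G(v)} \pi_s\}$ for every validator $v \in V$. By construction $\sigma' \in D(\sigma)$, so only two things remain to verify: (i) the restaking graph $G$ equipped with stakes $\sigma'$ is secure in the sense of Definition~\ref{def:secure_restaking}, and (ii) the total extra stake is at most $\max_{s\in S} d_G(s)\cdot T(\sigma)$.

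For (i), fix any nonempty $A\subseteq V$ and let $M(A)$ be the set of services $A$ can attack. The key observation is that every $s\in M(A)$ must satisfy $A\cap N_G(s)\neq\emptyset$, since otherwise the attacker's fraction on $s$ would be $0\leq\alpha_s$, contradicting $s\in M(A)$. Using the per-validator lower bound on $\sigma'_v$ and swapping the order of summation,
\[
\sum_{v\in A}\sigma'_v \;\geq\; \sum_{v\in A}\sum_{s\in N_G(v)}\pi_s \;=\; \sum_{s\in S}\pi_s\cdot |A\cap N_G(s)| \;\geq\; \sum_{s\in M(A)}\pi_s,
\]
so no subset $A$ admits a profitable attack, and security follows.

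For (ii), since $\sigma'_v - \sigma_v \leq \sum_{s\in N_G(v)}\pi_s$ for every $v$, summing over validators and reversing the double sum gives
\[
T(\sigma') - T(\sigma) \;\leq\; \sum_{v\in V}\sum_{s\in N_G(v)}\pi_s \;=\; \sum_{s\in S} d_G(s)\,\pi_s \;\leq\; \max_{s\in S} d_G(s)\cdot T(\pi).
\]
Finally, Lemma~\ref{lemma:inequality} supplies $T(\pi)\leq T(\sigma)$, which upon dividing by $T(\sigma)$ yields $PoSS(G)\leq \max_{s\in S} d_G(s)$.

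The one nontrivial step is the security argument in (i): it is not immediately obvious that a purely per-validator condition $\sigma'_v\geq \sum_{s\in N_G(v)}\pi_s$ can control a coordinated attack on several services simultaneously, since an attacker $A$ gets to sum values across all of $M(A)$. The double-counting identity together with the remark that $|A\cap N_G(s)|\geq 1$ for each $s\in M(A)$ resolves this cleanly; once that observation is in hand, the extra-stake computation is a routine counting argument.
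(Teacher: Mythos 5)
Your proof is correct and follows essentially the same route as the paper: the paper adds $\pi_s$ to \emph{every} validator in $N_G(s)$ (so $\sigma'_v=\sigma_v+\sum_{s\in N_G(v)}\pi_s$) and argues security by noting that each attacked service contributes at least one validator carrying an extra $\pi_s$, which is exactly your double-counting step $\sum_{s\in S}\pi_s|A\cap N_G(s)|\geq\sum_{s\in M(A)}\pi_s$. Your use of $\max\{\sigma_v,\sum_{s\in N_G(v)}\pi_s\}$ instead of the sum is a mild tightening that changes nothing in the argument or the bound.
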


\begin{proof}
    First, we show that $PoSS(G)\leq \max_{s\in S}d_G(s)$, for the intuition. For each service $s\in S$, add to all its validators $v\in N_G(s)$ a stake equal to $\pi_s$. The resulting restaking graph is secure. For any subset of validators, $U\subseteq V$, they lose more cumulative stake than the value they derive from the maximal set they attack, $M(U)$, resulting into a not profitable attack. The reason for this inequality is that by our construction, for any $s\in M(U)$, at least one of the validators in $U$ received an additional stake of $\pi_s$. We used an extra stake of $$\sum_{s\in S}d_G(s)\pi_s\leq \max_{s\in S}d_G(s)T(\pi)\leq \max_{s\in S}d_G(s)T(\sigma),$$ where the last inequality is derived in Lemma~\ref{lemma:inequality}. This completes the proof of the claim. Note that in this proof, we did not use original stakes validators are endowed to. Now suppose that to each validator $v\in V$ we add a stake equal to $\max(0, \sum_{s\in N(v)}\pi_s-\sigma_v)$. It is clear from the proof of the first claim that the resulting graph is secure. 
    
    In the following, we will try to upper bound the value $$E:=\sum_{v\in V}(\max(0, \sum_{s\in N_G(v)}\pi_s-\sigma_v)).$$ Let $\Delta_v:=\sum_{s\in N_G(v)}\pi_s-\sigma_v$. The set of validators can be divided into two disjoint subsets $V^+$ and $V^-$, so that $V^+:=\{v\in V: \Delta_v\geq 0\}$ and $V^-:=\{v\in V: \Delta_v < 0\}$. Let $S^+$ denote the set $\{s\in S: N_G(s)\subseteq V^+\}$ of the services whose validators are all in $V^+$, and $S'=S\setminus S^+$ be the set of remaining services, each having at least one validator in $V^-$. For $s\in S'$, $|N_G(s)\cap V^+|\leq d_G(s)-1$. Then, 
    \begin{equation}
        E=\sum_{v\in V^+}\Delta_v = \sum_{v\in V^+}(\sum_{s\in N_G(v)}\pi_s - \sigma_v) = \sum_{s\in S}\pi_s|N_G(s)\cap V^+| - \sum_{v \in V^+}\sigma_v.
    \end{equation}

    Next, the following chain holds: 

    \begin{align*}
        &\sum_{s\in S}\pi_s|N_G(s)\cap V^+| = \sum_{s\in S^+}\pi_s|N_G(s)\cap V^+| + \sum_{s\in S'}\pi_s |N_G(s)\cap V^+| \leq \\
        & \sum_{s\in S+}\pi_sd_G(s) + \sum_{s\in S'}\pi_s(d_G(s)-1).
    \end{align*}

    Therefore, the extra added value to stakes can be upper bounded as $$E\leq \sum_{s\in S+}\pi_s d_G(s) + \sum_{s\in S'}\pi_s (d_G(s)-1) - \sum_{v\in V^+}\sigma_v.$$ 

    Since each PoS protocol is secure, we have $\sum_{v\in N_G(s)}\sigma_v^s\geq \pi_s$ for all $s\in S$, which implies that:

    \begin{equation*}
        \sum_{s\in S^+}\sum_{v\in N_G(s)}\sigma^s_v\geq \sum_{s\in S^+}\pi_s.
    \end{equation*}

    For $s\in S^+$, $N_G(s)\subseteq V^+$, so the left hand side of the above is: 

    \begin{equation}
        \sum_{v\in V^+} \sum_{s\in S^+\cap N_G(v)}\sigma_v^s \leq \sum_{v\in V^+} \sum_{s\in N_G(v)}\sigma_v^s=\sum_{v\in V^+}\sigma_v.
    \end{equation}
    Therefore, $\sum_{v\in V^+}\sigma_v\geq \sum_{s\in S^+}\pi_s$. 
    Substituting in the upper bound on $E$, we get: 

    \begin{equation*}
        E\leq \sum_{s\in S^+}\pi_sd_G(s)+\sum_{s\in S'}\pi_s(d_G(s)-1)-\sum_{s\in S^+}\pi_s = \sum_{s\in S}\pi_s(d_G(s)-1).
    \end{equation*}

    Finally, $$\sum_{s\in S}\pi_s(d_G(s)-1)\leq (\max_{s\in S}d_G(s) - 1)T(\pi)\leq (\max_{s\in S}d_G(s) - 1)T(\sigma).$$ 
\end{proof}

Note that the upper bound in Proposition~\ref{prop:upper_bound_PoSS} exactly matches the lower bound obtained in Proposition~\ref{prop:lower_bound_PoSS}, unlike the case of restaking savings, in which case the lower and upper bounds are only asymptotically matching.

\subsection*{Interpretation of Results}

Although mostly conceptual, our results can provide some design recommendations. 
For an existing restaking project that has an underlying graph structure already with specified stake sizes and service valuations, we could compute restaking savings. 
For a given restaking graph, it must be feasible to check if it is secure using heuristic algorithms, instead of checking a polynomial-time sufficient condition which severely limits the domain and may result into zero restaking savings.

The upper bound obtained in Proposition~\ref{RS_upper_bound_alpha} is particularly relevant in practical cases, since the security parameters of most protocols are usually from the set of fractions $\{\frac{1}{4}, \frac{1}{3},\frac{1}{2}\}$. These values come from Byzantine fault-tolerant mechanisms, used to achieve consensus on the state of the chain. In such cases, the bound simplifies to at most $2$ to $4$ times the original total stake, providing a tight estimate of the capital overhead required to achieve equivalent PoS security.

The number of unique addressed validators is high for the large-scale ecosystems such as Ethereum chain, approaching one million, while the real number of validators can still be on the order of tens of thousands. All validators are staked for Ethereum, that is, the restaking protocol based on Ethereum has the highest degree $\max_s{d_G(s)} = n$. The number of projects that require staking services is typically low, in the order of hundreds, and it is highly unlikely that a validator would provide services for most of them, therefore, it would be expected that $\max_v(d_G(v))$ remains low.

Recently, EigenLayer's restaking project upgraded the system that only allows a unique slashing property: any staked asset can be slashed by a single service~\footnote{https://blog.eigencloud.xyz/introducing-the-eigenlayer-security-model/}. This restriction effectively makes it equivalent to validators partition their total stake across services, and hence results into separate secure PoS protocols.
As a result, the system no longer realizes capital efficiency gains from true stake reuse, and the restaking savings collapses to zero under this constraint.

\section{Conclusion}
We introduce a formal framework for comparing the capital efficiency of secure restaking protocols and equivalent independent Proof-of-Stake (PoS) protocols, measured by the total stake.
Our main results establish tight asymptotic bounds on the restaking savings (the relative extra capital needed to redistribute stakes from a secure restaking graph into separate secure PoS protocols): we prove matching upper and lower bounds of $\Theta(\sqrt{n})$, where $n$ is the number of validators. These bounds are achieved through greedy covering technique and explicit construction, showing that restaking can deliver meaningful capital efficiency gains in large-scale systems.

In the complementary direction, we demonstrate that aggregating secure PoS protocols into a secure restaking graph can incur a significant overhead. We provide matching lower and upper bounds on PoS savings (the extra stake needed post-aggregation to restore security), equal to maximum service degree in the restaking graph minus one. 

Taken together, our results indicate that neither restaking nor independent PoS protocols dominate the other in terms of capital requirements -- each has structural advantages depending on the underlying graph.

One important open question remains. 
It is designing a generic, polynomial-time-checkable sufficient condition for restaking graph security that admits graphs with strictly positive restaking savings.

\bibliographystyle{plain}
\bibliography{sample}

\end{document}